\newcommand{\VEC}[1]{\mbox{\boldmath${#1}$}}
\newcommand{\schr}{Schr\"odinger\xspace}
\newcommand{\ps}{phase space\xspace}
\newcommand{\ns}{non\-separable\xspace}
\newcommand{\ep}{\varepsilon}
\newcommand{\be}{\begin{equation}}
\newcommand{\ee}{\end{equation}}
\newcommand{\nn}{\nonumber}
\newcommand{\hT}{\hat{\VEC T}}
\newcommand{\hV}{\hat{\VEC V}}
\newcommand{\hH}{\hat{\VEC H}}
\newcommand{\hx}{\hat{\VEC x}}
\newcommand{\hp}{\hat{\VEC p}}
\newcommand{\hU}{\hat{\VEC U}}
\newcommand{\htH}{\hat{\VEC H}_A}
\newcommand{\new}[1]{{#1}}
\def\t1{e_{_T}}
\def\v1{e_{_V}}
\def\ct{e_{_{TTV}}}
\def\cv{e_{_{VTV}}}
\newtheorem{theorem}{Theorem}
\begin{document}

\title{Exponential unitary integrators for nonseparable quantum Hamiltonians}

\author{Maximilian \'Ciri\'c}
\email{max.ciric@outlook.com}
\affiliation{Department of Physics,~Astronomy~and~Mathematics,
    University~of Hertfordshire, Hatfield, AL10 9AB, UK}

\author{Denys I. Bondar}
\email{dbondar@tulane.edu}
\affiliation{Tulane University, New Orleans, LA 70118, USA}

\author{Ole Steuernagel}
\email{Ole.Steuernagel@gmail.com}
\affiliation{Department of Physics,~Astronomy~and~Mathematics,
    University~of Hertfordshire, Hatfield, AL10 9AB, UK}

\date{\today}

\begin{abstract}
  Quantum Hamiltonians containing nonsepa\-ra\-ble products of non-commuting operators, such as
  $\hx^m \hp^n$, are problematic for numerical studies using split-operator techniques since such
  products cannot be represented as a sum of separable terms, such as $T(\hp) + V(\hx)$. In the case
  of classical physics, Chin {[Phys. Rev. E {\bf 80}, 037701 (2009)]} developed a procedure to
  approximately represent nonsepa\-ra\-ble terms in terms of separable ones. We extend Chin's idea
  to quantum systems. We demonstrate our findings by numerically evolving the Wigner distribution of
  a Kerr-type oscillator whose Hamiltonian contains the nonsepa\-ra\-ble term
  $\hx^2 \hp^2+\hp^2 \hx^2$.  The general applicability of Chin's approach to any Hamiltonian of
  polynomial form is proven.
\end{abstract}

\maketitle

\section{\label{sec:introduction}Introduction}

Split-operator methods~\cite{Feit_JCP82} are popular across many domains of physics because they
combine the best of two worlds -- simplicity of implementation and preservation of physical
properties such as norm and energy. For the time evolution of Hamiltonian quantum systems (including
those with nonlinearities~\cite{Javanainen_JPA06}), unitary split-operator
integrators~\cite{Yoshida_PLA90} have emerged as reliable workhorses. However, such split-operator
me\-thods are currently restricted to Hamiltonians which are separable~\cite{Cabrera_PRA15}, i.e.
those that are a sum of two terms $T(\hp)$ and $V(\hx)$, each depending only on~$\hp$
and~$\hx$ respectively (throughout, we denote all quantum operators in bold face).

Other recent approaches treating nonsepa\-ra\-ble classical Hamiltonians exist, including one that
yields good long time behaviour for classical dynamics, but which does this at the resource-intensive
price of doubling up the \ps, see~\cite{Tao_PRE16} and references therein. In the case of quantum
systems this is too high a price to pay.

Classical evolution is governed by Poisson-brackets, whose commutators Chin's
method~\cite{Chin_PRE07} combines algebraically such that upon application of exponential
split-operator integrators it extends to the treatment of \ns Hamiltonians. Whilst such classical
evolution is \ps volume-conserving~\cite{Chin_AJP20} quantum evolution is not~\cite{Oliva_PhysA17},
and its evolution is governed by Moyal brackets. Therefore it remained unclear whether Chin's
approach can be adapted to quantum systems.

Here we show that Chin's approach can be extended to quantum systems, see
Sect.~\ref{sec:QuantumChin}, such as Kerr-oscillators, see Sect.~\ref{sec:Kerr}. In
Sect.~\ref{sec:ChinGeneral}, we generalize its application to (\ns) Hamiltonians composed of
polyno\-mials. We use Wigner's quantum \ps representation~\cite{Curtright_Treatise2014} and
investigate its numerical performance in Secs.~\ref{sec:WignerKerr} and~\ref{sec:Numerics} and then
conclude in Sect.~\ref{sec:conclusion}.

\section{Extending Chin's approach to quantum evolution}\label{sec:QuantumChin}

\subsection{\label{subsec:separable_to_nonseparable}From separable to \ns propa\-ga\-tors}

Separable Hamiltonians of the form $\hH = T(\hp) + V(\hx)$ allow for operator splitting.
\new{ Here, we deal with classical and quantum operators as well as their eigenvalues.
  Following~\cite{Cabrera_PRA15}, we therefore adopt the following notation. Throughout, bold
  lettering ($\bm x, \bm p$) refers to quantum scalars whereas regular lettering ($ x, p$) refers to
  their classical counterparts or functions, such as~$T$ or $V.$ Hats [($\hp$, $\hx$) vs. ($\hat p$,
  $\hat x$)] indicate their respective operators.}

A state is propagated through a small time-step, $\ep = -i \Delta t/\hbar$, by the unitary
propaga\-tor~$\hU(\ep)$. $\hU$ can be split into the approximate
form~\cite{Chin_PRE07,Yoshida_PLA90}
\begin{equation}\label{eq:U_time_ev}
\hU(\ep) = {\rm e}^{\ep \hH} = {\rm e}^{\ep(\hT+\hV)} \approx \prod_{i=1}^N
  {\rm e}^{t_i\ep \hT}{\rm e}^{v_i\ep \hV} ={\rm e}^{\ep\htH},
\end{equation}
here $\hT = T(\hp)$ and $\hV = V(\hx)$. $\htH$ approximates~$\hH$; since $\htH$ must be hermitian,
it has to have an even-power expansion in $\ep$. Here, all operator-products are meant to be from
left to right: $\prod_{i=1}^N \hU_i = \hU_1 \hU_2 \cdots \hU_N$.

Following~\cite{Chin_PRE07}, we will only consider symmetric factorisation schemes
for~(\ref{eq:U_time_ev}) such that the weighting coefficients are either $t_1=0$ and
$v_i=v_{N-i+1}$, $t_{i+1}=t_{N-i+1}$, or $v_N=0$ and $v_i=v_{N-i}$, $t_{i}=t_{N-i+1}$.

Then, according to the Baker-Campbell-Hausdorff formula~\cite{Yoshida_PLA90}, $\htH$ has the form
\begin{eqnarray}
&&\htH = \t1\hT+\v1\hV+\ep^2\ct[\hT\hT\hV]\nn\\
&&\qquad\quad+\ep^2\cv[\hV\hT\hV] +O(\ep^4) \; ,
\label{hopbk}
\end{eqnarray}
where condensed commutator brackets $[\hT\hT\hV]\equiv[\hT,[\hT,\hV]]$,
$[\hT\hV\hT\hV]\equiv[\hT,[\hV,[\hT,\hV]]]$, etc., are used. The coeffi\-ci\-ents $\t1$, $\cv$,
etc., are functions of $\{t_i\}$ and $\{v_j\}$.

By choosing $\{t_i\}$ and $\{v_j\}$ such that $\sum_i t_i = 0 = \sum_j v_j$, we impose that \be
\t1=\v1=0. \label{tvzero} \ee

If we also impose that $\cv=0$, or~$\ct=0$, then the approximate propa\-ga\-tor~(\ref{eq:U_time_ev})
codes for \ns Hamiltonians~$\htH$, either of the form \be \label{eq:HTTV_HVVT} \htH \approx
\hH_{TTV} \propto [\hT\hT\hV] {\rm, \;\; or}\;\; \htH \approx \hH_{VTV} \propto [\hV\hT\hV].  \ee

To summarise, combined separable terms in Eq.~(\ref{eq:U_time_ev}) can emulate specific
nonsepa\-ra\-ble operator products~(\ref{eq:HTTV_HVVT}).

Chin showed in \cite{Chin_PRE07} that the specific symmetric \new{product of nine exponentials}~(\ref{eq:U_time_ev}), with coefficients $ v_0 = -2 ( v_1 + v_2)$, $ t_1 = -t_2$,
$v_2 = -\frac{1}{2}v_1$ and $v_1 = \frac{1}{t_2^2}$, enables us to remove the third order
term $[\hV \hT \hV]$, \new{constituting a `two step forward one step back' scenario. It leads to scaling~$\Delta E \sim {\cal O}(\Delta t^{2/3})$~\cite{Chin_PRE07} (see Sect.~\ref{subsec:WignerKerrNumerics}) and results} in

\begin{eqnarray}\label{eq:Exp9}
\hU_9(\ep) & \approx & e^{\ep v_2 \hV} e^{\ep t_2 \hT} e^{\ep v_1 \hV} e^{\ep t_1 \hT} e^{\ep v_0 \hV} e^{\ep t_1 \hT} e^{\ep v_1 \hV} e^{\ep t_2 \hT} e^{\ep v_2 \hV} \nonumber \\
  & \approx & \exp(\ep^3[\hT \hT \hV] + \ep^5 E_5 + \ep^7 E_7 + \dots) \ . 
\end{eqnarray}
Here, $t_2$ is a free parameter that can be chosen to mini\-mise errors introduced through $E_5$;
following~\cite{Chin_PRE07}, we use $t_2 = -6^{\frac{1}{3}}$.

\subsection{\label{subsec:schrodingerPicture}Propagating a state in the \schr picture}

In split operator techniques, when a propagator~$\exp[\ep(T(\hp)+ V(\hx))]$, with a small time step
$\ep = -i \Delta t/\hbar$, is applied to state~$\psi({\bm x},t)$, we end up applying the sequence of
maps
\begin{eqnarray}\label{eq:unitary_op_f}
\psi(&{\bm x},& t+ \Delta t)  = \exp[\ep(T(\hp)+ V(\hx))]
                             \; \psi({\bm x}, t) \nonumber\\
  & \approx &  \exp(\ep T(\hp)) \exp(\ep V(\hx))
                             \; \psi({\bm x}, t) \nonumber\\
  & = &
              \mathcal{F}_{{\bm p} \rightarrow {\bm x}} \exp(\ep T({\hp})) \;
              \mathcal{F}_{{\bm x} \rightarrow {\bm p}} \exp(\ep V({\hx})) \psi({\bm x}, t) .
\end{eqnarray}
Here, $\cal F$ denotes fast Fourier transforms (and their inverses, in obvious notation) central to
the speedup and numerical stability associated with the use of split opera\-tor techniques.  To give
an example, expression~(\ref{eq:Exp9}) entails the application of~$\cal F$ at eight times per
step~$\ep$.

\section{\label{sec:Kerr}Product terms in Kerr oscillator Hamiltonian}

The single-mode Kerr oscillator, in its simplest form, has the energy of the harmonic oscillator
squared and is therefore ana\-ly\-tically fully solvable. Explicitly, its Hamiltonian has the form
\begin{equation}\label{eq:kerr_h}
  \hH_{\textrm{Kerr}} = \left ( \frac{\hp^2}{2} + \frac{\hx^2}{2} \right )^2
   = \frac{\hp^4}{4} +  \frac{[\hp^2,\hx^2]_+}{4} + \frac{\hx^4}{4} ,
 \end{equation}
 where we used the anti-commutator $[\hat{a},\hat{b}]_+ \equiv \hat{a}\hat{b} + \hat{b}\hat{a}$.
 The quantum Kerr effect comes about due to the self-interaction of photons in nonlinear
 media~\cite{Kirchmair_NAT13}. Its dynamics is non-trivial and periodic with a recurrence time of
 $\tau = \frac{\pi}{\hbar}$; its \ps current follows circles~\cite{Oliva_Kerr_18}.

 We now show that its nonsepa\-ra\-ble terms $[\hp^2, \hx^2]_+$ can be cast into the shape of~$\htH$
 in Eq.~(\ref{hopbk}).  To first order in the time step $\ep = -i \Delta t/\hbar$, the
 Moyal bracket~\cite{Cabrera_PRA15,Oliva_PhysA17} of quantum \ps dynamics agrees with the classical
 Poisson bracket~\cite{Chin_PRE07,Oliva_PhysA17}. We therefore have to hope that the
 commutator~$[\hT \hT \hV]$ in Eq.~(\ref{eq:Exp9}) behaves similarly to the classical Poisson
 bracket--based Lie operators ana\-lysed by Chin~\cite{Chin_PRE07}.

 Following Ref.~\cite{Chin_PRE07}, we therefore try the ansatz of a second order polynomial
 for~$\hT$ and a fourth order polynomial for~$ \hV$. The choices~$\hT=\frac{c_T}{2\sqrt{2}}\hp^2$
 and $ \hV = \frac{c_V}{12}\hx^4$ yield $[\hT \hT \hV] = \frac{c_T^2 c_V}{96} ( \hx^4 
 \hp^4+ \hp^4 
 \hx^4- 2 \; \hp^2
 \hx^4 
 \hp^2)$. With $c_T=1=c_V$ and using Heisenberg's commutation relation
 $[\hp,\hx] =\frac{\hbar}{\rm i}$ this simplifies \new{ (we used
   Mathematica~\cite{Munoz_JPCS16})} to
\begin{eqnarray}\label{eq:TTV}
 [\hT \hT \hV] = -\frac{\hbar^4}{4} - \frac{\hbar^2}{4} [\hp^2,\hx^2]_+, 
\end{eqnarray}
with a real-valued constant term which gives rise to a global phase that can be ignored or
subtracted out.

\new{Alternatively, one can, for example, use
$\hT=\frac{\hp^4}{24} + \frac{\hp^2}{2}  \text{ and } \hV=\frac{\hx^2}{2\sqrt{2}}$
while swapping $\hV \leftrightarrow
\hT$ in expression~(\ref{eq:Exp9}), the formal result is the same~(\ref{eq:TTV}), but numerical
performance can differ slightly~\cite{Bondar_github_nonsep}.}

We have thus established that a propagator using the separable terms $\hT$ and $\hV$ in~$\hU$, in
Eq.~(\ref{eq:U_time_ev}), can gene\-rate a propa\-ga\-tor featuring the product term $[\hp^2,\hx^2]_+$,
which represents the \ns middle term of the Kerr Hamiltonian in Eq.~(\ref{eq:kerr_h}).

\section{\label{sec:WignerKerr}Propagation of mixed states using Wigner's \ps approach}

Instead of \new{limiting ourselves to pure states propagated in the \schr picture}, as in
Sect.~\ref{subsec:schrodingerPicture}, we now study the time evolution of general quantum states
$W(x,p,t)$, in Wigner's \ps representation.

\new{We employ Wigner's representation for the following four reasons: firstly, many dissipative
  systems use coup\-ling terms of product form, so Chin's approach allows us to avoid iterations such
  as those as used in Eqs.~(63) and~(64) of~\cite{Cabrera_PRA15}.  Secondly, the Wigner representation
  describes mixed systems which result from such dissipative couplings. Thirdly, it can be
  efficiently implemented (in Schr\"{o}dinger equation-like form, see below
  and~\cite{Cabrera_PRA15}). Finally, comparison of the quantum with Chin's classical description
  becomes transparent when using the Wigner representation since it describes $W$'s dynamics using
  Moyal brackets~\cite{Moyal_MPCPS49}, the quantum analogue of Poisson brackets:}
\begin{equation}\label{eq:moyal_motion}
    \frac{\partial W}{\partial t} = \{\!\!\{ {H} , W \}\!\!\} = \frac{1}{\rm i \hbar} \hat{\cal G}[W] \; .
\end{equation}
Here, the Hamiltonian $H$, is given by the Wigner transform~\cite{Hancock_EJP04} of $\hH$, which in
the case of the Kerr Hamiltonian~(\ref{eq:kerr_h}) is
$ {H} = \left ( \frac{{p}^2}{2} + \frac{{x}^2}{2} \right )^2 - \frac{\hbar^2}{4}$.  The generator of
motion $\hat{\cal G}$ is the Lie superoperator associated with the Moyal bracket
\cite{Curtright_Treatise2014}, namely

\begin{align}\label{EqMoyalBraket}
    \{\!\!\{ f, g\}\!\!\} &\equiv \frac{f \star g - g \star f}{i\hbar} \notag\\
        &= \frac{2}{\hbar} f(x,p) \sin\left[\frac{\hbar}{2} \left( 
        \overleftarrow{\frac{\partial}{\partial x}} \overrightarrow{\frac{\partial}{\partial p}}
        - \overleftarrow{\frac{\partial}{\partial p}} \overrightarrow{\frac{\partial}{\partial x}}
    \right) \right] g(x,p),
\end{align}
where $\star$ denotes the Groenewold-Moyal product \cite{Groenewold_Phys46,Curtright_Treatise2014}
\begin{align}\label{EqMoyalStar}
    \star &\equiv \exp\left[\frac{i\hbar}{2} \left( 
        \overleftarrow{\frac{\partial}{\partial x}} \overrightarrow{\frac{\partial}{\partial p}}
        - \overleftarrow{\frac{\partial}{\partial p}} \overrightarrow{\frac{\partial}{\partial x}}
    \right) \right] \\
    &= \sum_{n=0}^{\infty} \frac{(i\hbar)^n}{2^n n!} \left( 
        \overleftarrow{\frac{\partial}{\partial x}} \overrightarrow{\frac{\partial}{\partial p}}
        - \overleftarrow{\frac{\partial}{\partial p}} \overrightarrow{\frac{\partial}{\partial x}}
    \right)^n
\end{align}
in which the arrows denote the `direction' of differentiation:
$f\overrightarrow{\frac{\partial}{\partial x}} g = g\overleftarrow{\frac{\partial}{\partial x}} f =
f \frac{\partial}{\partial x} g$.

Taylor's expansion of Moyal's bracket~\eqref{EqMoyalBraket} yields
\begin{align}\label{EqClassicalLimitMoyal}
    \{\!\!\{ f, g\}\!\!\} = \{f, g\} + \mathcal{O}\left( \hbar^n [\mbox{derivatives}^{2n}]\Bigr\vert_{n \geq 2} \right) .
\end{align}
To lowest order, this gives us Poisson's bracket
$\{f, g\} = \frac{\partial f}{\partial x} \frac{\partial g}{\partial p} - \frac{\partial f}{\partial
  p} \frac{\partial g}{\partial x}$ of classical mechanics. We see that in Wigner's representation
the time evolution is formally similar to that in the classical case treated by
Chin~\cite{Chin_PRE07}. \new{We mention in passing that sending $\hbar \downarrow 0$, for instance
  in Eq.~(\ref{eq:generator_op}), (also numerically) implements a classical propagator.}  Wigner's
representation is additionally of interest, because it can be treated efficiently numerically since
Moyal's equation of motion~(\ref{eq:moyal_motion}) can be cast into the form of a Schr\"odinger
equation~\cite{Cabrera_PRA15,Kolaczek_2020}, see next Sect.~\ref{sec:Numerics}.

Equation~\eqref{EqMoyalStar} connects the Wigner transform~\cite{Hancock_EJP04,Curtright_Treatise2014} of
non-commutative Hilbert space operator products $f(\hx, \hp) \cdot g(\hx, \hp)$ with
non-commutative $\star$-products $f(x,p) \star g(x,p)$ on \ps:
\begin{align}
    f(x,p) \star g(x,p) & \Longleftrightarrow f(\hx, \hp) \cdot g(\hx, \hp), \label{EqMoyalStartCorrespondance} \\
    \{\!\!\{ f(x,p), g(x,p) \}\!\!\} & \Longleftrightarrow
    [f(\hx, \hp), g(\hx, \hp)]. \label{EqMoaylCommutaorCorrespondence}
\\\nonumber
\end{align}

\section{\label{sec:Numerics}Numerical considerations}

In Wigner-Weyl transformed variables, we can give $\hat{\cal G}$ of Eq.~(\ref{eq:moyal_motion}) the
explicit form~\cite{Cabrera_PRA15}
\begin{eqnarray}\label{eq:generator_op}
  \hat{\cal G} & = & {H} \left ( \hat{x} - \frac{\hbar}{2}\hat{\theta} , \hat{p}
                     + \frac{\hbar}{2}\hat{\lambda} \right ) - {H} \left ( \hat{x}
                     + \frac{\hbar}{2}\hat{\theta} , \hat{p} - \frac{\hbar}{2}\hat{\lambda} \right ) \quad \\
               & \equiv & \hat{H}_{-,+} - \hat{H}_{+,-}  \quad ,  \label{eq:Hplusminus}
\end{eqnarray}
with the commutation relations \cite{Cabrera_PRA15,Bondar_PRL12}:
\begin{align}
\label{commutation-rels}
 {[} \hat{x} , \hat{p} {]_-} = 0,  \quad
 {[} \hat{x} , \hat{\lambda} {]_-} = i, \quad
 {[} \hat{p} , \hat{\theta} {]_-} = i, \quad
 {[} \hat{\lambda} , \hat{\theta} {]_-} = 0 \; ,
\end{align}
which span a suitable Wigner-Weyl `Hilbert \ps'~\cite{Cabrera_PRA15}. Hence,
\begin{eqnarray}\label{eq:EvolutionCompactlyWritten}
\hat{\cal U} & = & \exp \left( \ep \left ( \hat{H}_{-,+} - \hat{H}_{+,-} \right) \right ) \; ,
\end{eqnarray}
and for time-independent Hamiltonians
\begin{equation}\label{eq:moyal_time_ev}
    W(t) = \exp{\left (-i \frac{t-t_0}{\hbar} \hat{\cal G}\right)}[W(t_0)]  = \hat{\cal U}[W(t_0)] \; .
\end{equation}

We emphasise that in choosing the $(x,\theta)$-represen\-tation~\cite{Cabrera_PRA15},
for Eq.~(\ref{commutation-rels}), using suitable Bopp operators~\cite{bopp1956mecanique}
\begin{align}\label{xThetaRepresentation}
  \hat{x} = x, \quad
  \hat{p} = i \frac{\partial}{\partial \theta}, \quad
  \hat{\lambda} = -i \frac{\partial}{\partial x}, \text{ and }\;
  \hat{\theta} = \theta , \quad 
\end{align}
Eq.~(\ref{eq:moyal_motion}) becomes Schr\"{o}dinger equation-like, making it possible to apply efficient
numerical propagation employing fast Fourier transform
methods~\cite{Cabrera_PRA15,Arnold_SIAM96,Thomann_INMPDE17,Kolaczek_2020}. This is very useful for
systems that cannot be modelled as pure states, such as in the presence of decoherence.

Using $\hat{P}_{\pm} \equiv \hat{p} \pm \frac{\hbar}{2} \hat \lambda$ and
$\hat{X}_{\pm} \equiv \hat{x} \pm \frac{\hbar}{2} \hat \theta$, we can express
$\hat{\cal U}$~(\ref{eq:EvolutionCompactlyWritten}) for the Kerr Hamiltonian~\eqref{eq:kerr_h} as
\begin{widetext}
\begin{subequations}
\begin{eqnarray}
  \hat{\cal U}_{\textrm{Kerr}} & = & \exp{\left[ \frac{\ep}{4} \left ( \hat{P}_+^4 - \hat{P}_-^4
   + \left [ \hat{P}_+^2\hat{X}_-^2 - \hat{P}_-^2\hat{X}_+^2
   + \hat{X}_-^2\hat{P}_+^2 - \hat{X}_+^2\hat{P}_-^2 \right ]
   + \hat{X}_-^4 - \hat{X}_+^4 \right ) \right]}
                                     \label{eq:U_KerrMod} \\
   & = & \exp{\left [ \frac{\ep}{4} \left ( \hat{P}_+^4
   - \hat{P}_-^4 \right ) \right ]}
     \exp {\left[ \frac{\ep}{4} \left( [ \hat{P}_+^2,\hat{X}_-^2]_+
   - [ \hat{P}_-^2,\hat{X}_+^2 ]_+ \right) \right]}
     \exp {\left [ \frac{\ep}{4} \left ( \hat{X}_-^4 - \hat{X}_+^4\right ) \right ]}
    \; +{\cal O}(\ep^2) \;
                                     \label{eq:U_KerrMod2} \; .
\end{eqnarray}
\end{subequations}
According to Eq.~(\ref{eq:TTV}), the appearance of anti-commutators in the middle exponential of
expression~(\ref{eq:U_KerrMod2}) allows us to express the contribution from the central product term
in the Kerr Hamiltonian~(\ref{eq:kerr_h}) as a \emph{single} product of form~(\ref{eq:Exp9}); for an
efficient implementation in Python see~\cite{Bondar_github_nonsep}.
\end{widetext}

\subsection{\label{subsec:WignerKerrNumerics} Error Scaling for Kerr System}

In the following, we set $\hbar = 1$ and use  coherent states 
\begin{equation}\label{eq:W0}
    W \left ( x , p, t=0 \right ) = \frac{1}{\pi} \exp \left ( -(x-x_0)^2 - (p-p_0)^2 \right )
  \end{equation}
as initial states. For an example of their time evolution, see Fig.~\ref{fig:KerrIllustration}.
  
Using exponential propagators (whose action is time-reversible), we confirmed that Chin's approach
preserves the state's norm at machine precision.

We checked for energy and phase stability, varying the time step $\Delta t$. In the case of a classical
system with similar structure Chin reports~\cite{Chin_PRE07}, in accord with Eq.~(\ref{eq:Exp9}),
scaling with order ${\cal O}(\ep^5)/{\cal O}(\ep^3) \sim {\cal O}(\Delta t^{2/3})$; this is roughly what
we observed here, in the quantum Kerr case, as well.

The period of our Kerr system is $\pi$, see Fig.~\ref{fig:KerrIllustration}~(c). As a proxy for
phase drift, associated with this algorithm, we determine the wave function overlap at recurrence
times~$\tau=\pi$ and find that
$1- \langle W_{\rm exact}(0) | W(\tau) \rangle \sim {\cal O}(\Delta t^{1.2})$, a scaling better than
that of the energy fluctuations (we distinguish between $W_{\rm exact}$ and the numerically
propagated distribution $W$). For this we could not find a quantitative explanation. We also
numerically propa\-ga\-ted~$\psi$ and confirmed that
$1 - |\langle \psi_{\rm exact}(0)| \psi(\pi)\rangle|^2 \sim {\cal O}(\Delta t^{1.2})$ follows the
same scaling as the Wigner function propagator. To make sure there is nothing special about one
complete revolution we also checked for~$\tau=\pi/2$,
$1- \langle W_{\rm exact}(\pi/2) | W(\tau) \rangle \sim {\cal O}(\Delta t^{1.2})$, the behaviour is
the same. Alternatively, $\langle W_{\rm exact}(0)-W(\tau)|W_{\rm exact}(0)- W(\tau) \rangle^{1/2}$
and max${}_{x,p}|W_{\rm exact}(x,p,0)-W(x,p,\tau)| $, used as overlap measures, yield error scaling
$\sim {\cal O}(\Delta t^{2/3})$ and, again, behave the same whether running for one complete
revolution or otherwise. For details see Ref.~\cite{Bondar_github_nonsep}.

\subsection{\label{subsec:7Exp} Modification of Chin's expression}

In the Kerr-case studied here, it is possible to modify Chin's expression~(\ref{eq:Exp9}) by
removing first and last terms yiel\-ding the approximation

\begin{figure*}[t]
   \begin{minipage}[h]{1.999\columnwidth}
     \includegraphics[width=0.235\textwidth]{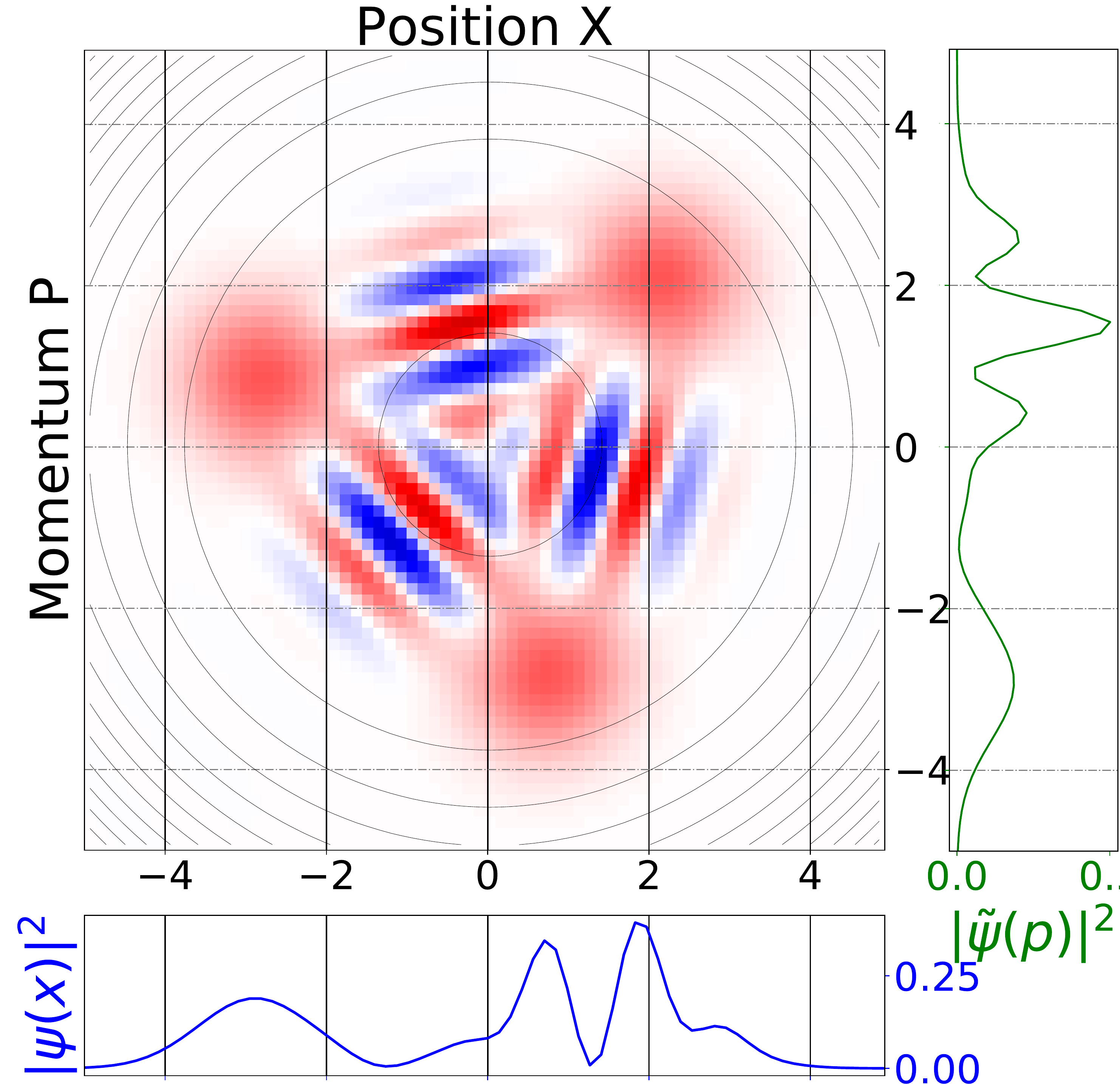}
     \put(-106,120){\rotatebox{0}{{\bf (a)} $\substack{t=1.047 \approx \tau/3 \\ (\Delta t=0.0001)}$ }}     \includegraphics[width=0.235\textwidth]{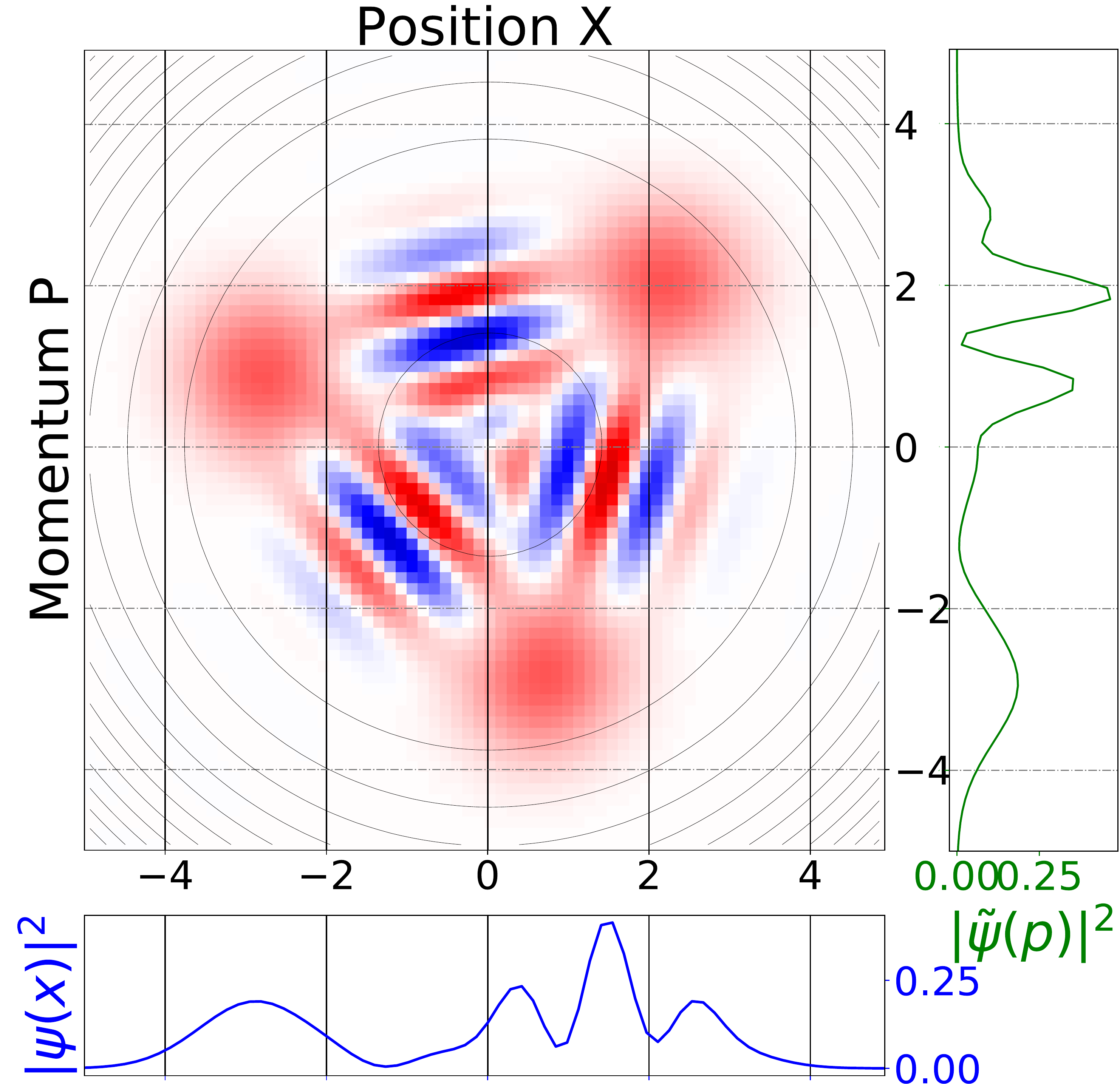}
     \put(-106,120){\rotatebox{0}{{\bf (b)} $\substack{t=2.095 \approx \frac{2}{3} \tau \\ (\Delta t=0.0001)}$ }} \includegraphics[width=0.235\textwidth]{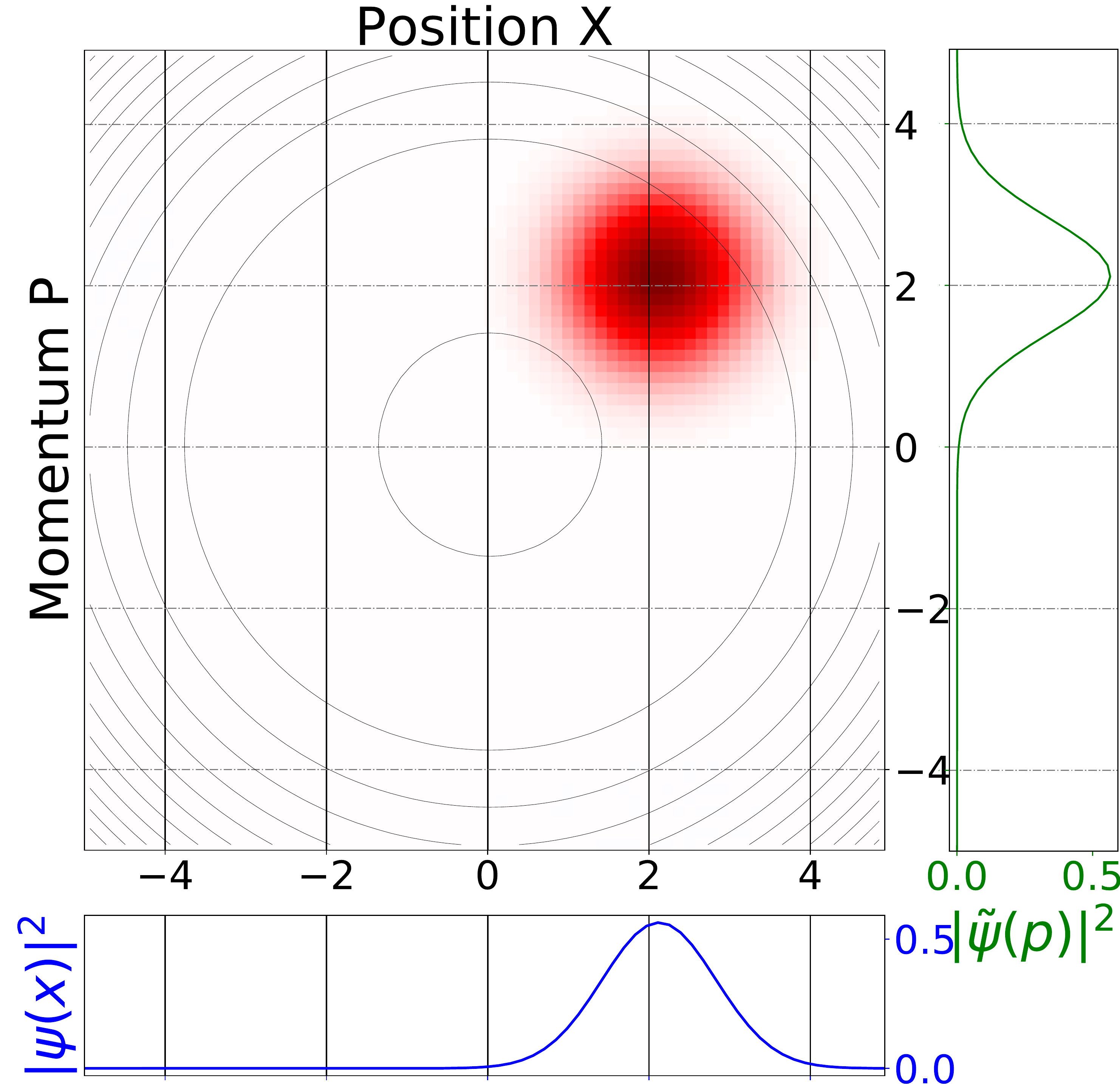}
     \put(-106,120){\rotatebox{0}{{\bf (c)} $\substack{t=3.142 \approx \tau \\ (\Delta t=0.0001)}$ }}
    \includegraphics[width=0.0475\textwidth]{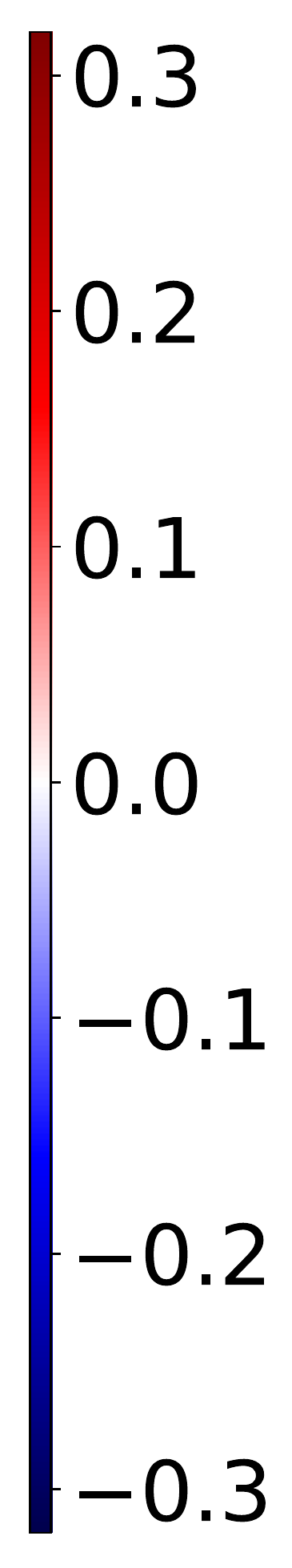} \includegraphics[width=0.235\textwidth]{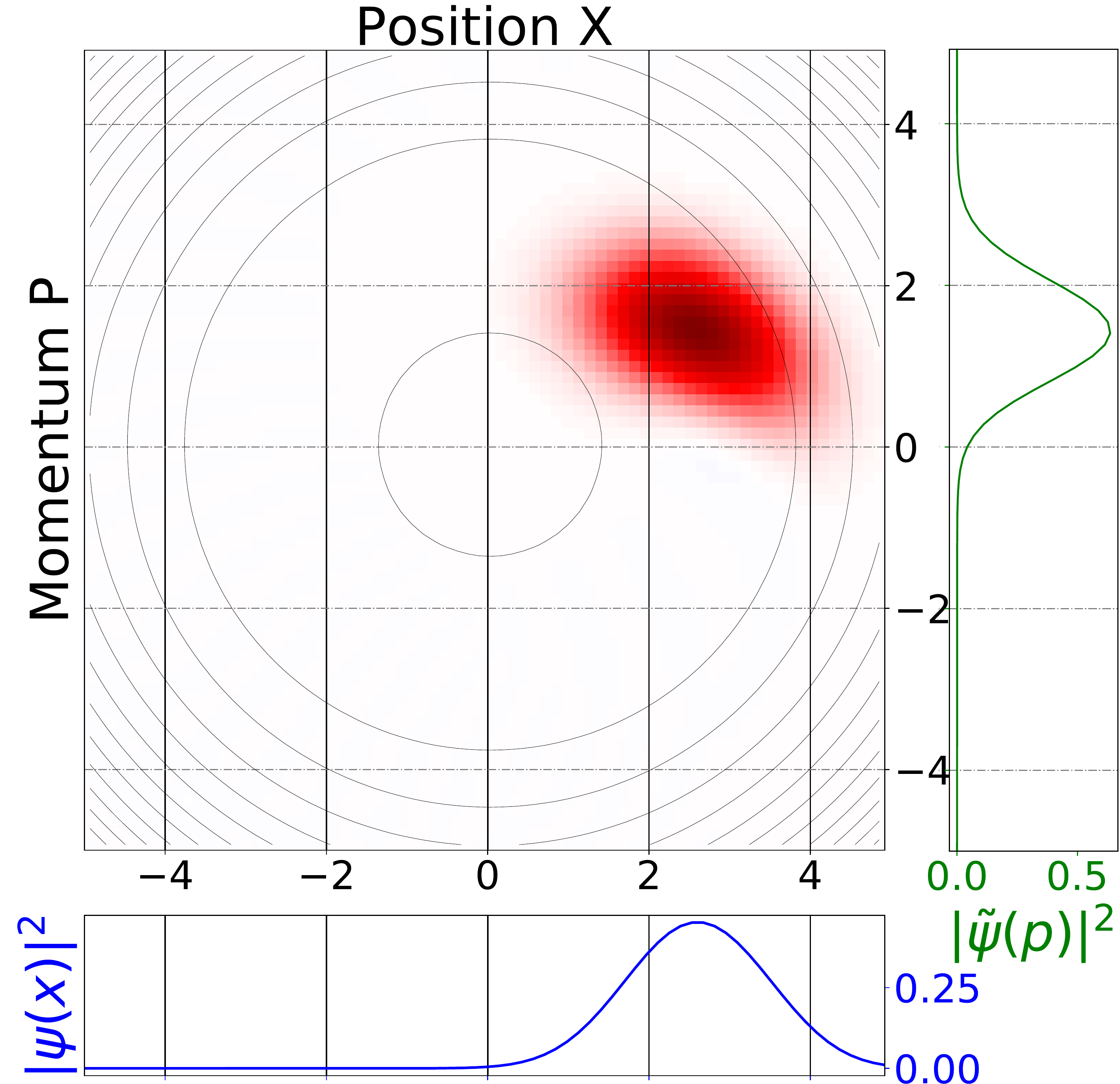}
     \put(-106,120){\rotatebox{0}{{\bf (d)} $\substack{t=3.142 \approx \tau \\ (\Delta t=0.0015)}$ }}
     \\ \includegraphics[width=0.23\textwidth]{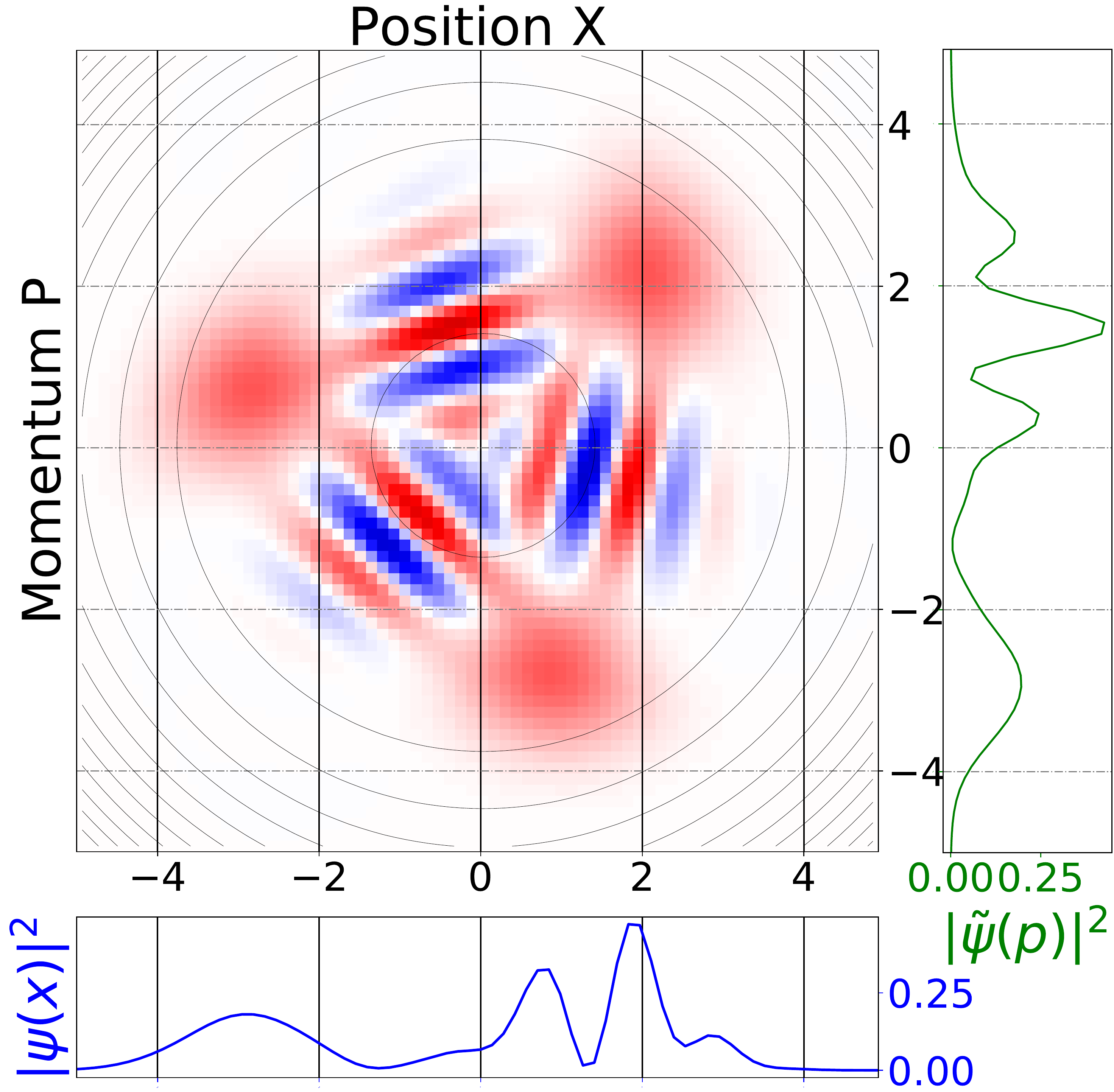} \includegraphics[width=0.23\textwidth]{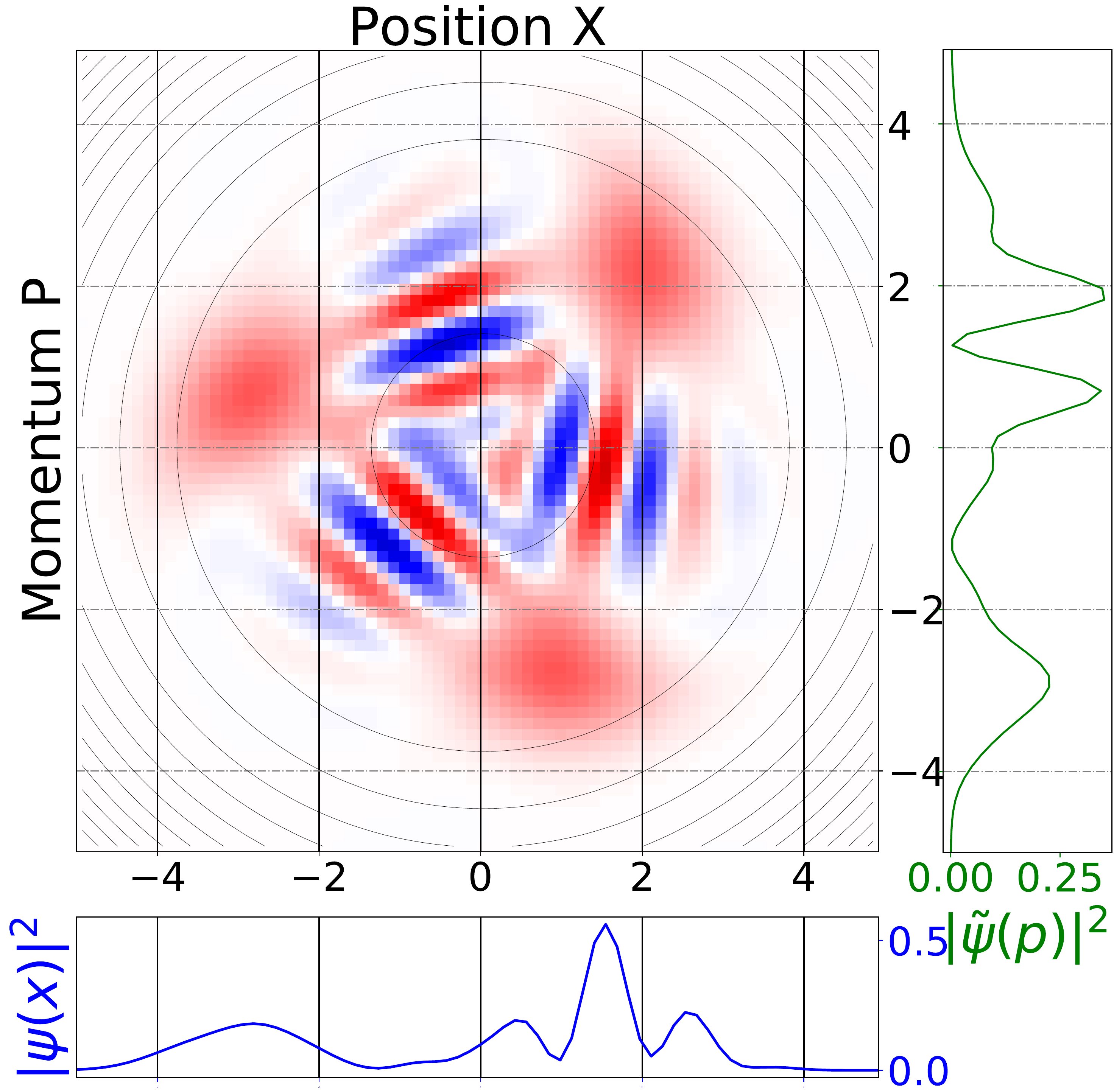} \includegraphics[width=0.23\textwidth]{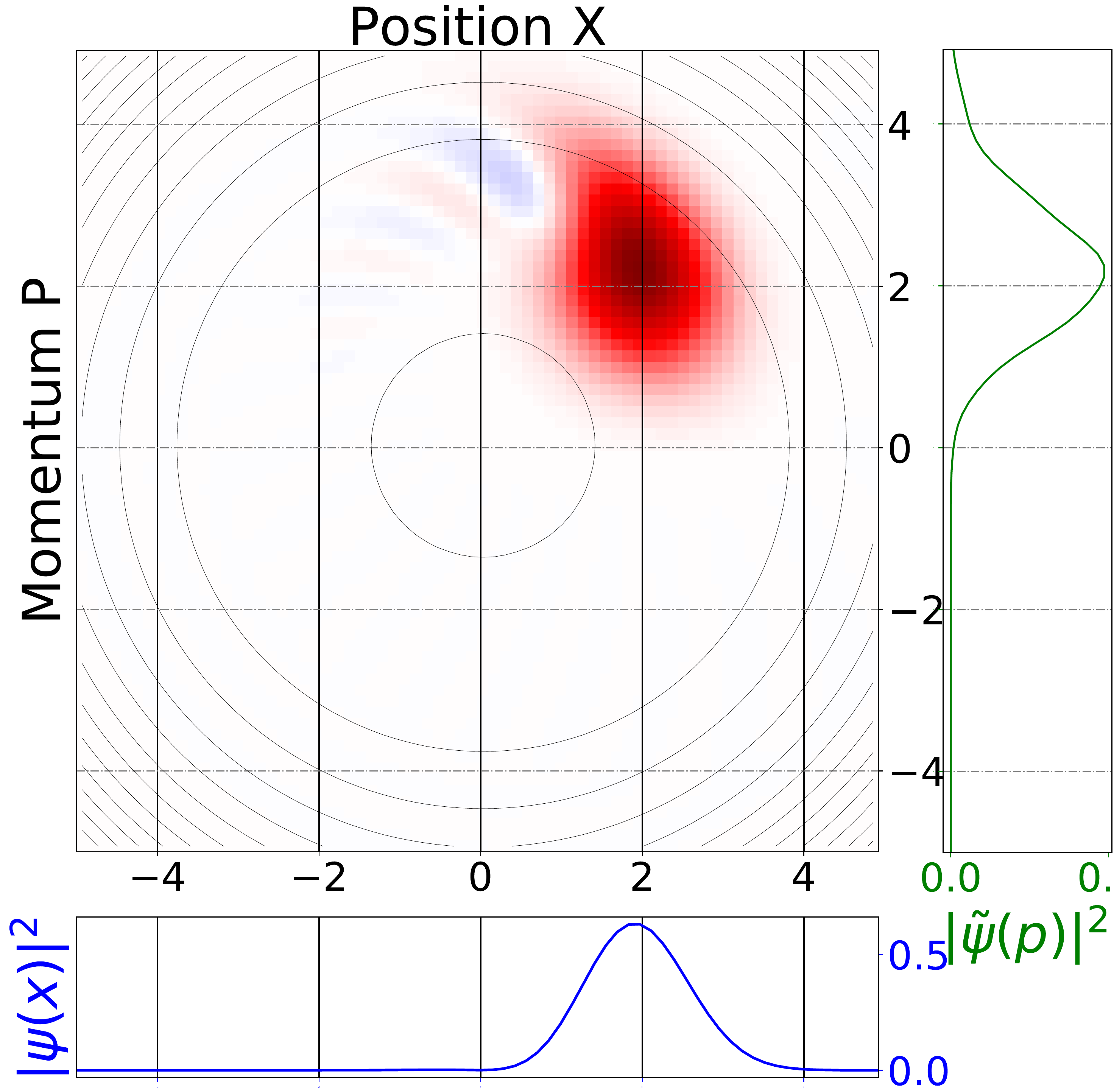}
         \includegraphics[width=0.0475\textwidth]{ColorBar.pdf} \includegraphics[width=0.235\textwidth]{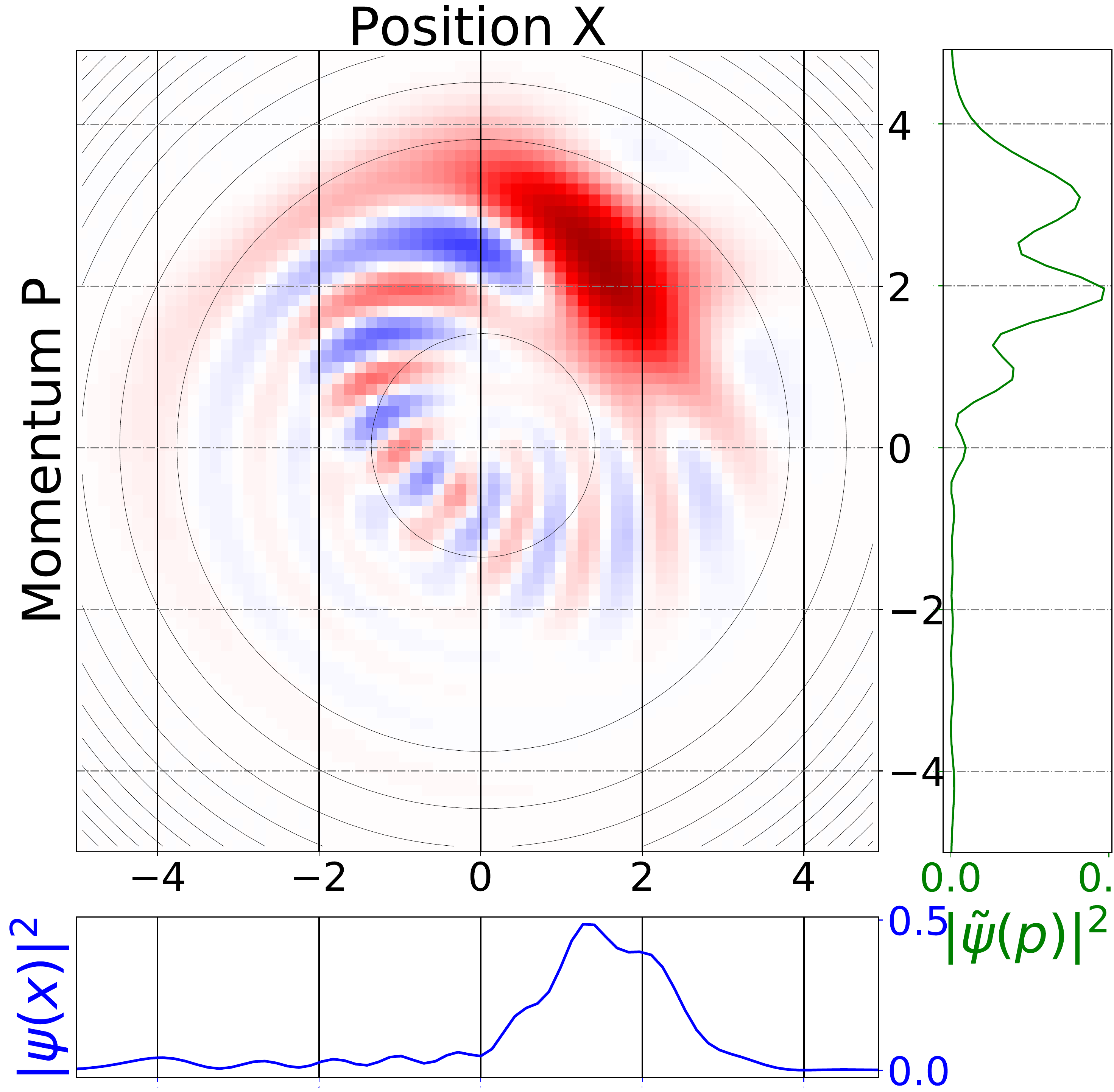}
     \caption{Kerr system~(\ref{eq:kerr_h}) evolves an initial coherent state~(\ref{eq:W0}) with
           amplitude `3' [$W_0(x_0=3/\sqrt{2}, p_0=3/\sqrt{2})$] using timesteps $\Delta t=0.0001$ in
           columns {\bf (a)} to {\bf (c)} and $\Delta t=0.0015$ in column {\bf (d)}. We observe that the
           initial state recurs at the system's recurrence time~$\tau = \pi $~columns {\bf (c)} and
           {\bf (d)} and fractional revivals of the initial states with approximate three-fold
           symmetry~\cite{Oliva_Shear_19,Averbukh_PLA89,Robinett_PR04} form at times $\tau / 3$ and
           $2 \tau / 4$, see columns {\bf (a)} and {\bf (b)}, respectively.  The same parameters are
           used in top and bottom row, except that the top row employs the more accurate propagator
           $\hU_9$ of Eq.~(\ref{eq:Exp9}) whereas the bottom row uses $\hU_7$ of
           Eq.~(\ref{eq:Exp7}), the numerical errors become more pronounced with both, increasing
           propagation time and increasing timesteps.}
    \label{fig:KerrIllustration}
   \end{minipage}
  \end{figure*}

  \begin{subequations}
  \begin{eqnarray}\label{eq:Exp7}
  \hU_7(\ep) & = & e^{\ep t_2 \hT} e^{\ep v_1 \hV} e^{\ep t_1 \hT}
  e^{\ep v_0 \hV} e^{\ep t_1 \hT} e^{\ep v_1 \hV} e^{\ep t_2 \hT} \qquad \\
  & = & \hU_9(\ep) - \frac{\ep^3}{t_2^3} [\hV\hV\hT] + {\cal O}(\ep^5) .
  \end{eqnarray}
\end{subequations}

We use the modified coefficients $ v_0 = -2 v_1$, $ t_1 = -t_2$, $v_1 = \frac{1}{t_2^2}$, and
again~$\hT=\frac{c_T}{2\sqrt{2}}\hp^2$ and $ \hV = \frac{c_V}{12}\hx^4$, with the final
result~\cite{Munoz_JPCS16}

\begin{equation}\label{eq:TTV7}
\begin{aligned}
  \hU_7(\ep) &= 1 + \ep^3\left( -\frac{\hbar^4}{4} - \frac{\hbar^2}{4} [\hp^2,\hx^2]_+
  + \frac{\hbar^2}{9 \sqrt{2} \, t_2^3} \hx^6 \right) + {\cal O}(\ep^5). \qquad
\end{aligned}
\end{equation}

This is similar to result~(\ref{eq:TTV}). We have to compensate for the unwanted term in $\hx^6$, by
subtracting $ \frac{\hbar^2}{9 \sqrt{2} \, t_2^3} \hx^6$ from potential terms in
Eqs.~(\ref{eq:unitary_op_f}) and~(\ref{eq:U_KerrMod2}), but gain the advantage of having to
numerically calculate fewer terms.

Whether a form like~(\ref{eq:Exp7}) that is as useful as~(\ref{eq:TTV7}) can be found in the general 
case, we do not know at this stage. We emphasise that $\hU_7$ has fewer product terms and runs a little 
faster but also performs worse than~$\hU_9$ of Eq.~(\ref{eq:Exp9}) in absolute terms, 
see Fig.~\ref{fig:KerrIllustration}. The errors in energy and phase both scale roughly with
${\cal O}(\Delta t^{2/3})$, similarly to and worse than in the case of~$\hU_9$, respectively.

For more details consult our code~\cite{Bondar_github_nonsep}, further discussions of such questions is
beyond the scope of this work.

\section{\label{sec:ChinGeneral}Chin's approach is general}

In order to show that Chin's approach is generally applicable, let us prove
\begin{theorem}\label{TheoremGenralDecomposition}
  Any polynomial $\new{\Pi}(\hx, \hp)$ of $\hx$ and $\hp$ can be written as a finite linear combination of
  $[\hx^n\hx^n\hp^m] \equiv [\hx^n, [\hx^n, \hp^m]]$ and
  $[\hp^m\hx^n\hp^m] \equiv [\hp^m, [\hx^n, \hp^m]]$.
\end{theorem}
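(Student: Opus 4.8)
The plan is to turn the statement into a triangular linear--algebra problem graded by total degree. Fix once and for all the standard ordering, so that every polynomial is written as $\sum_{a,b} c_{ab}\,\hx^a\hp^b$, and introduce the filtration by total degree $a+b$. The crucial structural fact is that each generator is supported on a single ``diagonal'' $a-b=\mathrm{const}$: since $[\hx^n,\hx^a\hp^b]=\hx^a[\hx^n,\hp^b]$ and $[\hx^n,\hp^b]$ keeps $a-b$ fixed, the map $[\hx^n,\,\cdot\,]$ shifts $a-b$ by the constant $n$ while preserving single-diagonal support, and likewise for $[\hp^m,\,\cdot\,]$. Hence $[\hx^n\hx^n\hp^m]$ lives entirely on the diagonal $a-b=2n-m$ and $[\hp^m\hx^n\hp^m]$ on $a-b=n-2m$. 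The whole problem therefore decouples into independent problems, one per diagonal, and on a given diagonal the generators are simply vectors indexed by total degree. I would then induct on degree: realise the top-degree part of a target monomial from a single generator and absorb the strictly lower-degree remainder using the inductive hypothesis.

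The first concrete step is to read off the leading (top-degree) term of each family. By the correspondence~(\ref{EqMoaylCommutaorCorrespondence}) together with the classical limit~(\ref{EqClassicalLimitMoyal}), the top-degree part of a commutator is $i\hbar$ times the Poisson bracket, which gives leading terms $\propto \hbar^2\,\hx^{2n-2}\hp^{m-2}$ for $[\hx^n\hx^n\hp^m]$ and $\propto \hbar^2\,\hx^{n-2}\hp^{2m-2}$ for $[\hp^m\hx^n\hp^m]$, both with nonzero coefficients. Choosing $n,m$ appropriately, the first family realises as a leading term every monomial whose $x$-degree is \emph{even}, and the second every monomial whose $p$-degree is \emph{even}. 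Whenever the target has an even $x$- or $p$-degree the triangular argument closes at once: subtract the matching generator, land in strictly lower degree, and invoke induction. The base cases are checked directly, e.g.\ $[\hx,[\hx,\hp^2]]\propto\hbar^2$ supplies the constants and $[\hp,[\hx^3,\hp]]\propto\hbar^2\hx$ supplies degree one.

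The main obstacle -- and the real crux -- is the remaining class of monomials whose $x$- and $p$-degrees are \emph{both} odd. These never appear as a leading term of either family, since the leading $x$-degree $2n-2$ and the leading $p$-degree $2m-2$ are structurally even; they can only be reached from the \emph{subleading} terms of higher generators. This is precisely where the quantum character must be exploited: the $\hbar$-corrections dropped in~(\ref{EqClassicalLimitMoyal}) populate the lower-degree slots that the classical Poisson bracket leaves empty, so that a both-odd monomial does occur below the (both-even) leading term of a higher generator. The plan is to cancel the leading parts of one first-family and one second-family generator sharing the same top monomial, and hope to isolate the desired both-odd term at the next order; the decisive point is then a non-degeneracy condition, namely that the ratio of the subleading to the leading coefficient differs between the two families.

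I expect verifying this ratio condition to be the hard part of the proof, and I would test it first on the smallest both-odd monomial, $\hx\hp$. The danger -- with no classical analogue -- is that the two ratios coincide, which would happen exactly when each generator collapses to a single symmetrised monomial so that its subleading content is rigidly locked to its leading content; in that event the cancellation removes the both-odd term along with the leading term and these monomials fall outside the span. Settling whether the quantum subleading structure is rich enough to break this degeneracy is therefore the step on which the whole theorem turns.
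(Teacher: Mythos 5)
Your plan follows the same leading-order, triangular strategy as the paper's own proof (pass to Weyl symbols, read off the top term of each double bracket from the Poisson limit, and induct downwards in degree), but you push it one step further than the paper does, and the point you single out as ``the step on which the whole theorem turns'' is precisely where the paper's argument fails. The paper simply asserts that the family $\{\!\!\{ x^n,\{\!\!\{x^n,p^m\}\!\!\}\}\!\!\}$, $\{\!\!\{ p^m,\{\!\!\{x^n,p^m\}\!\!\}\}\!\!\}$ is ``linearly independent and large enough to span'' the polynomials of order $(N,M)$; it never examines the monomials $x^a p^b$ with $a$ and $b$ both odd, which, as you correctly observe, are never the leading term of either family.

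Your ``danger'' scenario is in fact realized, and in a stronger form than you anticipate: the both-odd monomials do not occur even among the \emph{subleading} terms. Because $x^n$ depends on $x$ alone, every occurrence of $\overleftarrow{\partial_p}\overrightarrow{\partial_x}$ annihilates it, so
\begin{equation*}
\{\!\!\{x^n,g\}\!\!\}=\frac{2}{\hbar}\sum_{k\ \mathrm{odd}}\frac{(-1)^{(k-1)/2}}{k!}\left(\frac{\hbar}{2}\right)^{k}\bigl(\partial_x^{k}x^n\bigr)\bigl(\partial_p^{k}g\bigr).
\end{equation*}
Applying this twice to $p^m$ yields only monomials $x^{2n-j-k}p^{m-j-k}$ with $j,k$ odd, whose $x$-degree $2n-j-k$ is always \emph{even}; symmetrically, every monomial in the symbol of $[\hp^m\hx^n\hp^m]$ has even $p$-degree. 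Hence the coefficient of any $x^ap^b$ with $a,b$ both odd vanishes identically on the span of all the generators, and a polynomial such as $\Pi=\hx\hp+\hp\hx$ (Weyl symbol $2xp$) lies outside it. So the non-degeneracy condition your construction requires is provably violated; your proof cannot be completed, but neither can the paper's --- the theorem as stated is false, and the unproved spanning claim is exactly the gap.

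Two mitigating remarks. First, the obstruction is invisible in the paper's application: the Kerr term $[\hp^2,\hx^2]_+$ has Weyl symbol $2x^2p^2-\hbar^2$, all of whose monomials are both-even, so the construction of Sect.~\ref{sec:Kerr} is unaffected. Second, the split-operator scheme actually admits arbitrary polynomials $T(\hp)$ and $V(\hx)$ rather than single monomials, and the cross terms $[\hx^{n_1},[\hx^{n_2},\hp^m]]$ with $n_1+n_2$ odd and $m$ odd do reach the both-odd sector (e.g.\ $V=\hx^2+\hx$, $T=\hp^3$ produces a term proportional to $\hbar^2(\hx\hp+\hp\hx)$). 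The theorem is therefore repairable by enlarging the generating set along these lines, but that is a different statement, and a different proof, from the one given in the paper.
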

\begin{proof}
  Let us provide a constructive proof. A polynomial \new{$\Pi(\hx, \hp)$ is
    Weyl-transformed~\cite{Hancock_EJP04}} to $P(x,p)$ in \ps, according to
  Eq.~\eqref{EqMoyalStartCorrespondance}. Assume $x^N p^M$ is its leading term, namely, $P(x,p)$ is
  a polynomial of order $(N,M)$.
  
  Via Eq.~\eqref{EqMoaylCommutaorCorrespondence}, a double commutator $[\hx^n \hx^n \hp^m]$
  corresponds to $\{\!\!\{ x^n, \{\!\!\{x^n, p^m \}\!\!\} \}\!\!\}$. In fact, the latter are
  polynomials because the Moyal bracket~\eqref{EqMoyalBraket} is obtained by differentiating its
  arguments. According to Eq.~\eqref{EqClassicalLimitMoyal}, the leading term of the polynomial
  $\{\!\!\{ x^n, p^m \}\!\!\}$ is $x^{n-1}p^{m-1}$; hence, the leading term of
  $\{\!\!\{ x^n, \{\!\!\{x^n, p^m \}\!\!\} \}\!\!\}$ is $x^{2n-2} p^{m-2}$. Likewise, a double
  commutator $[\hp^m \hx^n \hp^m]$ corresponds to the polynomial
  $\{\!\!\{ p^m, \{\!\!\{x^n, p^m \}\!\!\} \}\!\!\}$ with the leading term of order
  $x^{n-2} p^{2m-2}$.

The set of polynomials 
\begin{align}
  \Big[ \{\!\!\{ x^n, \{\!\!\{x^n, p^m \}\!\!\} \}\!\!\}, \{\!\!\{ p^m, \{\!\!\{x^n, p^m \}\!\!\} \}\!\!\}
  \Big]_{n=1,2,\ldots, N + 2}^{m=1,2,\ldots, M+2}
\end{align}
is linearly independent and large enough to span the set of polynomials of order $(N,M)$, including $P(x,p)$.
\end{proof}

We observe that in the above proof all Moyal brackets $\{\!\!\{.,. \}\!\!\} $ can be substituted by
Poisson brackets $\{.,.\} $ whilst leaving the argument intact: Chin's approach applies to polynomial
classical hamiltonians as well.

Theorem~\ref{TheoremGenralDecomposition} prescribes how any polynomial quantum Hamiltonian can be
decomposed into the Hamiltonians of the form \eqref{eq:HTTV_HVVT}. Hence, Chin's algorithm is very
general.

\section{\label{sec:conclusion}Conclusion}

We have shown that Chin's method~\cite{Chin_PRE07} for the propa\-gation of classical \ns
Hamiltonians can be adopted to quantum systems.  Chin's method is general, and therefore allows for
the universal treatment of \ns Hamiltonians using split-operator techniques. \new{Chin's method should be especially well suited for numerical simulations of large open quantum systems using stochastic Schr\"{o}dinger equations~\cite{jacobs_straightforward_2006} since their errors scale poorly.}

\begin{acknowledgments}
\new{We thank both reviewers for their many thoughtful suggestions.}
  D.I.B. was supported by by the W. M. Keck Foundation and Army Research Office (ARO) (grant
  W911NF-19-1-0377; program manager Dr.~James Joseph). The views and conclusions contained in this
  document are those of the authors and should not be interpreted as representing the official
  policies, either expressed or implied, of ARO or the U.S. Government. The U.S. Government is
  authorized to reproduce and distribute reprints for Government purposes thank both reviewers for their many thoughtful suggestions.notwithstanding any
  copyright notation herein.
\end{acknowledgments}

\section*{Data Availability Statement}

The codes developed for the current study are available at~\cite{Bondar_github_nonsep}.

\bibliography{Ole_Bibliography}




\end{document}